\Crefname{figure}{Fig.}{Figs.}
\Crefname{section}{Sec.}{Secs.}
\Crefname{table}{Tab.}{Tabs.}
\Crefname{equation}{Eq.}{Eqs.}
\newtheorem{theorem}{Theorem}
\newtheorem{proposition}{Proposition}
\title{On Sampling of Multiple Correlated Stochastic Signals}
\name{Lin Jin$^{a}$ \qquad Hang Sheng$^{a}$ \qquad Hui Feng$^{a,b,\star}$ \qquad Bo Hu$^{a,b}$ \thanks{$^{\star}$ Corresponding author.}}
\address{$^{a}$ College of Future Information Technology, Fudan University, Shanghai 200433, China \\
    $^{b}$ State Key Laboratory of Integrated Chips and Systems, Fudan University, Shanghai 200433, China}
\begin{document}
\ninept
\maketitle
\begin{abstract}
Multiple stochastic signals possess inherent statistical correlations, yet conventional sampling methods that process each channel independently result in data redundancy.
To leverage this correlation for efficient sampling, we model correlated channels as a linear combination of a smaller set of uncorrelated, wide-sense stationary latent sources.
We establish a theoretical lower bound on the total sampling density for zero mean-square error reconstruction, proving it equals the ratio of the total spectral bandwidth of latent sources to the number of correlated signals.
We then develop a constructive multi-band sampling scheme to achieve this bound. 
The proposed method operates via spectral partitioning of the latent sources, followed by spatio-temporal sampling and interpolation. 
Experiments on synthetic and real datasets confirm that our scheme achieves near-lossless reconstruction precisely at the theoretical sampling density, validating its efficiency.
\end{abstract}
\begin{keywords}
Correlated signals, wide-sense stationary signal, sampling theory
\end{keywords}
\section{Introduction}
\label{sec:intro}

Multi-channel stochastic signals, which we term multiple stochastic signals in this paper, arise across many modern signal-processing applications and often serve as the primary information-bearing quantities in various complex systems.
Notably, these channels are frequently not independent but instead inherently exhibit statistical correlation.
For instance, in Multiple-Input Multiple-Output (MIMO) wireless receivers, spatial and propagation effects induce inter-channel correlation \cite{bjornson2017massive};
in multiple biomedical recordings, neural synchrony produce correlated activity \cite{panzeri2022structures}; 
and in multi-microphone audio captures, a single acoustic source creates correlated recordings across sensors \cite{blanco2020microphone}.
The structure embedded in these correlations contains system-level information unavailable from any single channel.
Leveraging such correlation therefore becomes critical to increase signal processing efficiency.

The proliferation of multiple systems leads to the generation of massive datasets, imposing heavy burdens on communication, storage, and computational resources \cite{hashem2015rise}.
Efficiently converting these continuous-time signals into discrete observations through sampling is therefore a critical necessity for mitigating resource costs. 
In multiple scenarios, naively sampling each channel independently ignores the information reuse potential offered by inter-channel correlation, leading to redundant data acquisition.
Consequently, developing sampling theories that explicitly exploit correlation to minimize sampling cost is of paramount importance.

For single-channel signals, the classical Shannon-Nyquist sampling theorem is a cornerstone for deterministic signals.
It is less well known that the theorem can also extend to wide-sense stationary (WSS) stochastic signals with band-limited power spectral density (PSD) \cite{papoulis2002probability}.
Existing studies have investigated sampling of single stochastic signals  \cite{rosenblatt2012random,kipnis2017distortion}, but they do not address the multiple signals.

As to multiple stochastic signals, research has explored correlation from various perspectives.
Shlezinger et al. \cite{shlezinger2019joint} proposed a distributed sampling and joint reconstruction framework, demonstrating the benefit of leveraging correlations; nevertheless, a theoretical analysis remains lacking. 
Theories rooted in compressed sensing have partially addressed rate reduction, yet their guarantees rely on strong sparsity assumptions, limiting their applicability to more general signals \cite{mishali2010theory,angrisani2014multi,amini2021exploiting}.
Based on a latent source model, Ahmed and Romberg \cite{ahmed2019compressive,ahmed2023sub} proposed a random mixing architecture with corresponding sampling rate bound. However, their approach requires a dedicated analog front-end to physically mix signals before sampling and targets deterministic waveform recovery rather than stochastic sampling guarantees.
In summary, there is a lack of comprehensive theoretical guarantees on the sampling lower bound for multiple correlated stochastic signals, and practical schemes that attain this bound are still less explored.

We model the correlated signals as linear combinations of low-dimensional latent sources and establish a lower bound on the total sampling density required for reconstruction with zero mean squared error (MSE).
This result reveals the connection between signal correlation and sampling efficiency.
Building upon this bound, we then propose a constructive multi-band sampling scheme that achieves the minimum rate via power spectrum partitioning.

The main contributions of this paper are:
(i) Under the latent source generation model, 
we prove that the lower bound on total sampling density for reconstruction with zero MSE equals the ratio of the latent signals' total spectral bandwidth to the number of correlated signals;
(ii) We propose a constructive multi-band sampling scheme to achieve the lowest total sampling density while ensuring reconstruction in the mean-square sense;
(iii) We validate the effectiveness of the proposed scheme through experiments on both synthetic and real datasets, demonstrating that the scheme realizes near-lossless reconstruction at the theoretical sampling density.

\section{signal model}
\label{sec:signal model}

\begin{figure}[t]
\centering
\includegraphics[width=8.2cm]{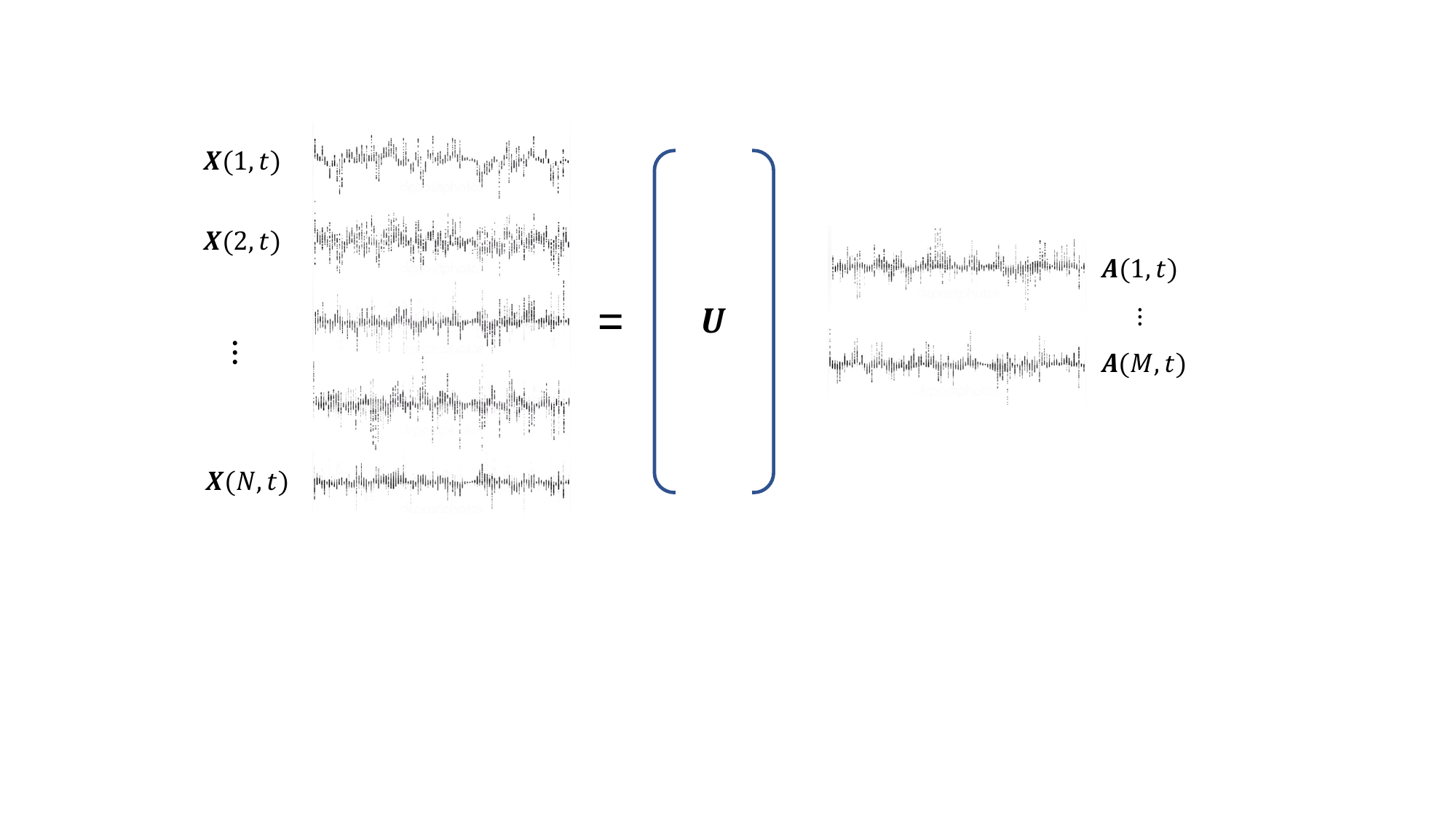}
\caption{Correlated signal model.}
\label{fig:correlated signal}
\end{figure}

Let $\boldsymbol{\mathit{X}}=[X(1,t),X(2,t),...,X(N,t)]^T,t\in \mathbb{R}$ denote $N$-channel correlated stochastic signals, where each channel $X(n,\cdot),n=1,...,N$ is a continuous-time signal.
We establish a correlated signal model inspired by \cite{ahmed2019compressive}, in which correlated signals can be represented as linear combinations of a smaller number of latent signals.
Specifically, we assume that each channel $X(n,\cdot)$ is a linear mixture of $M$ mutually uncorrelated signals $\boldsymbol{\mathit{A}}=[A(1,t),...,A(M,t)]^T,t\in \mathbb{R}$:
\begin{equation}
    \label{eq:signal_model}
    \boldsymbol{\mathit{X}}=\boldsymbol{\mathit{U}}\boldsymbol{\mathit{A}}.
\end{equation}
Each channel of the latent signals $\boldsymbol{\mathit{A}}$ is a WSS continuous-time signal and has finite power spectral bandwidth.
The mixing matrix $\boldsymbol{\mathit{U}}\in \mathbb{R}^{N\times M}$ combines the $M$ latent signals.
The signal model is illustrated in \Cref{fig:correlated signal}.

The generative model can be interpreted from two perspectives.
The first is a physical channel-mixing interpretation, common in applications like wireless communications or array processing, where physically distinct sources $\boldsymbol{\mathit{A}}$ are mixed by a channel $\boldsymbol{\mathit{U}}$ to induce the observed correlation $\boldsymbol{\mathit{X}}$.
The second is a statistical latent-factor interpretation, analogous to Principal Component Analysis (PCA) or Factor Analysis (FA), where $\boldsymbol{\mathit{A}}$ represents abstract uncorrelated factors that reveal an underlying low-dimensional structure in $\boldsymbol{\mathit{X}}$ and are not required to be physically separable.
While their conceptual origins differ, both interpretations converge on the same algebraic structure presented in \Cref{eq:signal_model}.

Denote the correlation function of $\boldsymbol{\mathit{X}}$ by
\begin{equation*}
    \boldsymbol{\mathit{R}}_{\boldsymbol{\mathit{X}}}(i,j,\tau)=\mathbb{E}\left [ X(i,t)X^H(j,t-\tau)\right ],
\end{equation*}
where it reduces to the autocorrelation of a single channel when $i=j$, and to the cross-correlation between two channels when $i\ne j$.
If there exist indices $i\ne j$ and lags $\tau$ such that $\boldsymbol{\mathit{R}}_{\boldsymbol{\mathit{X}}}(i,j,\tau) \ne 0$, then the signals $X(i,\cdot)$ and $X(j,\cdot)$ exhibit inter-channel statistical correlation. 
In this case, the multiple stochastic signals $\boldsymbol{\mathit{X}}$ is referred to as correlated.
Correspondingly, we define the latent stochastic signals $\boldsymbol{\mathit{A}}$ as uncorrelated if the cross-correlation function between any two channels is identically zero:
\begin{equation}
    \label{eq:uncorrelate definition}
    \boldsymbol{\mathit{R}}_{\boldsymbol{\mathit{A}}}(i,j,\tau) \equiv  0,\quad \forall i\ne j,\, \forall \tau \in \mathbb{R}.
\end{equation}


By the Wiener-Khinchin theorem, the auto-(or cross-) power spectral density of $\boldsymbol{\mathit{X}}$ is the Fourier transform of its auto-(or cross-) correlation function \cite{kay1988modern}: $\boldsymbol{\mathit{S}}_{\boldsymbol{\mathit{X}}}(i,j,f)= \mathscr{F}\{\boldsymbol{\mathit{R}}_{\boldsymbol{\mathit{X}}}(i,j,\tau)\}.$
For brevity, we simplify the notation by retaining only the autocorrelation of $\boldsymbol{\mathit{A}}$, denote as $\boldsymbol{\mathit{R}}_{\boldsymbol{\mathit{A}}}(m,\tau)$ for $m=1,...,M,\, \tau \in \mathbb{R}$, with the corresponding power spectral densities denoted by $\boldsymbol{\mathit{S}}_{\boldsymbol{\mathit{A}}}(m,f)$.

Assume the power spectrum supports of $\boldsymbol{\mathit{A}}$ are known.
For each channel $m$, let the spectral support set $\mathcal{B}_m \subset \mathbb{R}$ be Lebesgue-measurable with finite measure, i.e. $\boldsymbol{\mathit{S}}_{\boldsymbol{\mathit{A}}}(m,f)=0$ for $f \notin \mathcal{B}_m$.
Denote by $\mu(\cdot)$ the Lebesgue measure and define the total power spectrum bandwidth of $\boldsymbol{\mathit{A}}$ as
\begin{equation}
    \label{eq:bandlimited_spectrum}
    B:=\sum_{m=1}^{M} \mu(\mathcal{B}_m),
\end{equation}
with $B<+\infty$.

For the $n$-th channel signal $X(n,\cdot)$, its temporal sampling set is denoted as $\mathcal{S}_n=\{t_{ns}:t_{ns} \in \mathbb{R}\}$.
The overall sampling set for multiple signals $\boldsymbol{\mathit{X}}$ is 
\begin{equation*}
    \mathcal{S}=\{(n_s,t_{ns}):n_s\in \{1,...,N\},t_{ns} \in \mathbb{R}\}.
\end{equation*}
We define its spatial projection as $\mathcal{S}_C=\{n_s:(n_s,t_{ns}) \in \mathcal{S}\}$ and the temporal projection as $\mathcal{S}_T=\sum_{n_s\in \mathcal{S}_C}  \mathcal{S}_n=\{t_{ns}:(n_s,t_{ns}) \in \mathcal{S}\}$.


To quantify the overall sampling cost of the $N$-channel signals $\boldsymbol{\mathit{X}}$, particularly for non-uniform sampling schemes, the total average sampling density is defined in \cite{ji2019hilbert} as
\begin{equation*}
    D(\mathcal{S}):=\liminf_{t \to \infty}  \frac{|\mathcal{S} \cap \{\{1,...,N\}\times[-t, t]\}|}{2tN}.
\end{equation*}


Focus on multiple correlated stochastic signals described above, our objective is to develop a sampling theory that derives a lower bound of $D(\mathcal{S})$ and to construct a feasible sampling and reconstruction scheme.

\section{Sampling theorem}
\label{sec:sampling theorem}

Based on the signal model in \Cref{sec:signal model}, we present a sampling theorem for multiple correlated stochastic signals that operates directly on the $N$ channels, without any pre-sampling analog mixing stages.
The key principle is that the information rate required to capture the $N$ correlated signals is governed by the spectral support of the $M$ underlying, uncorrelated latent signals. 
Hence the sampling density of $\boldsymbol{\mathit{X}}$ can be related to the total spectral bandwidth of $\boldsymbol{\mathit{A}}$. 
\Cref{ssec:main result} states the main result and sketches its proof;
\Cref{ssec:multi-band sampling scheme} then gives a constructive multi-band sampling and reconstruction scheme to achieve the bound.

\subsection{Main result}
\label{ssec:main result}

Given the generative signal model \eqref{eq:signal_model}, we now first specify the conditions under which $\boldsymbol{\mathit{X}}$ is guaranteed to be correlated. 

\begin{proposition}
Let the signals $\boldsymbol{\mathit{X}}$ be generated by the model in \eqref{eq:signal_model}, $\boldsymbol{\mathit{X}}$ is guaranteed to be correlated if the following conditions hold:
(i) The linear transformation is dimensionality-increasing, i.e., $N > M$;
(ii) Each $A(m,\cdot)$ has positive power, i.e., $\mathbb{E}\left [\left |A(m,\cdot)\right |^2\right ] > 0$ for all $m\in \{1,...,M\}$;
(iii) Each row of the mixing matrix $\boldsymbol{\mathit{U}}$ is a nonzero vector.
\end{proposition}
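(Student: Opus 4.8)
The plan is to produce a single pair of distinct channels whose cross-correlation is nonzero, which is exactly what the definition of ``correlated'' requires. The cleanest place to look is the zero-lag slice $\tau = 0$, where the cross-correlations assemble into the spatial covariance matrix of $\boldsymbol{\mathit{X}}$ and the obstruction becomes a matrix-rank statement.

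First I would expand the cross-correlation directly from the model \eqref{eq:signal_model}. Writing $X(i,t)=\sum_{m}U_{im}A(m,t)$ and invoking the mutual uncorrelatedness \eqref{eq:uncorrelate definition}, every cross term with $m\ne k$ drops out, leaving
\[
\boldsymbol{\mathit{R}}_{\boldsymbol{\mathit{X}}}(i,j,\tau)=\sum_{m=1}^{M}U_{im}U_{jm}^{*}\,\boldsymbol{\mathit{R}}_{\boldsymbol{\mathit{A}}}(m,\tau).
\]
Evaluating at $\tau=0$ and writing $p_m:=\boldsymbol{\mathit{R}}_{\boldsymbol{\mathit{A}}}(m,0)=\mathbb{E}[|A(m,t)|^2]$, the matrix $\boldsymbol{\mathit{C}}:=[\boldsymbol{\mathit{R}}_{\boldsymbol{\mathit{X}}}(i,j,0)]_{i,j}$ factors as $\boldsymbol{\mathit{C}}=\boldsymbol{\mathit{U}}\boldsymbol{\mathit{P}}\boldsymbol{\mathit{U}}^{H}$ with $\boldsymbol{\mathit{P}}=\mathrm{diag}(p_1,\dots,p_M)$, so $\boldsymbol{\mathit{C}}$ is exactly the spatial covariance of $\boldsymbol{\mathit{X}}$.

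Next I would run a rank argument by contradiction. Since $\boldsymbol{\mathit{C}}$ is a product passing through the $M$ columns of $\boldsymbol{\mathit{U}}$, we have $\mathrm{rank}(\boldsymbol{\mathit{C}})\le M<N$ by condition (i), so $\boldsymbol{\mathit{C}}$ is singular. Suppose, toward a contradiction, that $\boldsymbol{\mathit{X}}$ is uncorrelated, i.e.\ $\boldsymbol{\mathit{R}}_{\boldsymbol{\mathit{X}}}(i,j,\tau)=0$ for all $i\ne j$ and all $\tau$; in particular every off-diagonal entry of $\boldsymbol{\mathit{C}}$ vanishes, making $\boldsymbol{\mathit{C}}$ diagonal. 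Its $i$-th diagonal entry equals $\sum_{m}p_m|U_{im}|^{2}$, which is strictly positive because condition (iii) supplies some $U_{im}\ne 0$ and condition (ii) supplies $p_m>0$. A diagonal matrix with all nonzero diagonal entries has rank $N$, contradicting $\mathrm{rank}(\boldsymbol{\mathit{C}})\le M<N$.

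Hence $\boldsymbol{\mathit{C}}$ must carry at least one nonzero off-diagonal entry, giving indices $i\ne j$ with $\boldsymbol{\mathit{R}}_{\boldsymbol{\mathit{X}}}(i,j,0)\ne 0$, which establishes that $\boldsymbol{\mathit{X}}$ is correlated. The only genuinely load-bearing step—the crux rather than a real obstacle—is the simultaneous use of all three hypotheses in the rank stage: condition (i) forces $\boldsymbol{\mathit{C}}$ to be rank-deficient, while conditions (ii) and (iii) force a purely diagonal $\boldsymbol{\mathit{C}}$ to be full rank, and these two facts cannot coexist. The correlation expansion and the diagonal-rank fact are both routine, so I expect no substantive difficulty beyond packaging this contradiction cleanly.
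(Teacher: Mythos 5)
Your proof is correct and follows essentially the same route as the paper's: both argue by contradiction at the zero-lag slice, where the spatial covariance $\boldsymbol{\mathit{U}}\boldsymbol{\mathit{P}}\boldsymbol{\mathit{U}}^{H}$ has rank at most $M<N$ yet would have to be a full-rank diagonal matrix if $\boldsymbol{\mathit{X}}$ were uncorrelated. The only difference is packaging: the paper phrases the final contradiction as the impossibility of $N$ nonzero orthogonal vectors $\boldsymbol{\mathit{y}}_i=\boldsymbol{\mathit{G}}\boldsymbol{\mathit{v}}_i^{H}$ in $\mathbb{C}^{M}$, which is exactly the Gram-matrix view of your rank statement on $\boldsymbol{\mathit{C}}$.
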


\begin{proof}
Two mild conditions (ii) and (iii) are imposed to exclude degenerate cases where an mixed channel would be identically zero.

For $i,j\in \{1,...,N\},\, \tau\in \mathbb{R}$, we have
\begin{equation}
	\begin{split}
    \boldsymbol{\mathit{R}}_{\boldsymbol{\mathit{X}}}(i,j,\tau)&=\mathbb{E}[X(i,t) X^H(j,t-\tau)] \notag \\
    &=\mathbb{E}[U(i,\cdot)A(\cdot,t) A^H(\cdot,t-\tau)U^H(j,\cdot)] \notag \\
    &=U(i,\cdot) \mathbb{E}[A(\cdot,t) A^H(\cdot,t-\tau)]U^H(j,\cdot) \notag \\
    &=U(i,\cdot)\boldsymbol{\mathit{R}}_{\boldsymbol{\mathit{A}}}(\cdot,\cdot,\tau)U^H(j,\cdot).
	\end{split}
\end{equation}

If the time difference $\tau$ is fixed, we can see $\boldsymbol{\mathit{R}}_{\boldsymbol{\mathit{X}}}(i,j,\tau)$ as a $N\times N$ matrix $\boldsymbol{\mathit{R}}_{\boldsymbol{\mathit{X}}}$, and $\boldsymbol{\mathit{R}}_{\boldsymbol{\mathit{A}}}$ is a $M\times M$ diagonal matrix,  $\boldsymbol{\mathit{R}}_{\boldsymbol{\mathit{A}}}=\text{diag}(\boldsymbol{\mathit{R}}_{\boldsymbol{\mathit{X}}}(1,1,\tau),\boldsymbol{\mathit{R}}_{\boldsymbol{\mathit{X}}}(2,2,\tau),...,\boldsymbol{\mathit{R}}_{\boldsymbol{\mathit{X}}}(M,M,\tau))$.
For a fixed time difference, the relationship of correlation functions between signals $\boldsymbol{\mathit{X}}$ and $\boldsymbol{\mathit{A}}$ can be expressed using matrix operations: $$\boldsymbol{\mathit{R}}_{\boldsymbol{\mathit{X}}}=\boldsymbol{\mathit{U}} \boldsymbol{\mathit{R}}_{\boldsymbol{\mathit{A}}} \boldsymbol{\mathit{U}}^H,$$
The entry of row $i$, column $j$ is:
\begin{equation}
	\begin{split}
    \boldsymbol{\mathit{R}}_{\boldsymbol{\mathit{X}}}(i,j,\tau)&=[\boldsymbol{\mathit{U}} \boldsymbol{\mathit{R}}_{\boldsymbol{\mathit{A}}} \boldsymbol{\mathit{U}}^H]_{ij} \notag \\
    &=\sum_{k=1}^{M} \sum_{l=1}^{M} \boldsymbol{\mathit{U}}(i,k) \boldsymbol{\mathit{R}}_{\boldsymbol{\mathit{A}}}(k,l,\tau) \overline{\boldsymbol{\mathit{U}}(j,l)},
	\end{split}
\end{equation}
and in addition, from $\boldsymbol{\mathit{A}}$ is uncorrelated we know $\boldsymbol{\mathit{R}}_{\boldsymbol{\mathit{A}}}(k,l,\tau)=0$ for $k\ne l$, so
\begin{equation}
	\begin{split}
    \boldsymbol{\mathit{R}}_{\boldsymbol{\mathit{X}}}(i,j,\tau)
    &=\sum_{k=1}^{M} \sum_{l=1}^{M} \boldsymbol{\mathit{U}}(i,k) \boldsymbol{\mathit{R}}_{\boldsymbol{\mathit{A}}}(k,l,\tau) \overline{\boldsymbol{\mathit{U}}(j,l)} \notag \\
    &=\sum_{k=1}^{M} \boldsymbol{\mathit{U}}(i,k) \boldsymbol{\mathit{R}}_{\boldsymbol{\mathit{A}}}(k,k,\tau) \overline{\boldsymbol{\mathit{U}}(j,k)} \notag \\
    &=\sum_{k=1}^{M} \boldsymbol{\mathit{U}}(i,k)  \overline{\boldsymbol{\mathit{U}}(j,k)} \boldsymbol{\mathit{R}}_{\boldsymbol{\mathit{A}}}(k,k,\tau).
	\end{split}
\end{equation}

Based on the preliminaries above, we now proceed with a proof by contradiction.
The core idea is to assume $\boldsymbol{\mathit{X}}$ are uncorrelated and show that this assumption, when combined with the model's given conditions, leads to a logical impossibility.

Assume that $\boldsymbol{\mathit{X}}$ is uncorrelated, which means $\forall i,j\in \{1,...,N\}$, $i\ne j$ and $\forall \tau$, we have 
\begin{equation}
\label{eq:lemma1-1}
    \boldsymbol{\mathit{R}}_{\boldsymbol{\mathit{X}}}(i,j,\tau)=0.
\end{equation}

Consider the case where $\tau =0$. Denote $\boldsymbol{\mathit{R}}_{\boldsymbol{\mathit{A}}}(k,k,0)=\mathbb{E}\left [\left |A(k,\cdot)\right |^2\right ]$ as $d_k$, and $\boldsymbol{\mathit{R}}_{\boldsymbol{\mathit{A}}}(\cdot,\cdot,0)\allowbreak =\text{diag}(d_1, d_2,..., d_M)$.
Then \eqref{eq:lemma1-1} yields 
\begin{equation}
\label{eq:lemma1-2}
    \sum_{k=1}^{M} \boldsymbol{\mathit{U}}(i,k)  \overline{\boldsymbol{\mathit{U}}(j,k)} d_k=0 \quad \text{for} \; i\ne j.
\end{equation}

Let $\boldsymbol{\mathit{v}}_i \in \mathbb{C}^{1\times M}$ be the row vector of $\boldsymbol{\mathit{U}}$, $\boldsymbol{\mathit{v}}_i=[\boldsymbol{\mathit{U}}(i,1),\allowbreak \boldsymbol{\mathit{U}}(i,2),...,\boldsymbol{\mathit{U}}(i,M)]$.
From condition (ii) we know $d_k > 0$, therefore $\boldsymbol{\mathit{R}}_{\boldsymbol{\mathit{A}}}(\cdot,\cdot,0)$ is a positive-definite matrix.
Let $\boldsymbol{\mathit{G}}=\sqrt{\boldsymbol{\mathit{R}}_{\boldsymbol{\mathit{A}}}(\cdot,\cdot,0)}=\text{diag}(\sqrt{d_1},\allowbreak  \sqrt{d_2},...,\sqrt{d_M})$, and we have $\boldsymbol{\mathit{R}}_{\boldsymbol{\mathit{A}}}(\cdot,\cdot,0) =\boldsymbol{\mathit{G}}^H \boldsymbol{\mathit{G}}=\boldsymbol{\mathit{G}} \boldsymbol{\mathit{G}}$.
Then \eqref{eq:lemma1-2} yields
\begin{equation}
\label{eq:lemma1-3}
	\begin{split}
    \boldsymbol{\mathit{v}}_i \boldsymbol{\mathit{R}}_{\boldsymbol{\mathit{A}}}(\cdot,\cdot,0) \boldsymbol{\mathit{v}}_j^H 
    &= \boldsymbol{\mathit{v}}_i \boldsymbol{\mathit{G}}^H \boldsymbol{\mathit{G}} \boldsymbol{\mathit{v}}_j^H \\
    &= (\boldsymbol{\mathit{G}} \boldsymbol{\mathit{v}}_i^H)^H (\boldsymbol{\mathit{G}} \boldsymbol{\mathit{v}}_j^H) \\
    &=0  \qquad \qquad \qquad \qquad \text{for} \; i\ne j.
	\end{split}
\end{equation}

Let $\boldsymbol{\mathit{y}}_i=\boldsymbol{\mathit{G}} \boldsymbol{\mathit{v}}_i^H \in \mathbb{C}^M,\, i\in \{1,...,N\}$, then \eqref{eq:lemma1-3} yields
\begin{equation}
\label{eq:lemma1-4}
    \boldsymbol{\mathit{y}}_i^H \boldsymbol{\mathit{y}}_j =0  \quad \text{for} \; i\ne j.
\end{equation}
which means the set $\{\boldsymbol{\mathit{y}}_1,\boldsymbol{\mathit{y}}_2,...\boldsymbol{\mathit{y}}_N\}$ is an orthogonal set of vectors.
Since $\boldsymbol{\mathit{y}}_i=\vec{0}$ if and only if $\sqrt{d_k} \, \overline{\boldsymbol{\mathit{U}}(i,k)}=0$ for $\forall k\in \{1,...,M\}$,  
and from condition (ii) and (iii) we know that $\sqrt{d_k} \, \overline{\boldsymbol{\mathit{U}}(i,k)}>0$, thus $\boldsymbol{\mathit{y}}_i\ne \vec{0}$.

Therefore, $\{\boldsymbol{\mathit{y}}_1,\boldsymbol{\mathit{y}}_2,...\boldsymbol{\mathit{y}}_N\}$ is an orthogonal set containing $N$ nonzero vectors.
A set of $N$ nonzero orthogonal vectors is necessarily linearly independent. 
In an $M$-dimensional vector space, the number of linearly independent vectors cannot exceed the dimension $M$.
Thus, we must have $N \le M$.
However, this contradicts our initial premise (i), therefore the assumption cannot be true.
Hence, $\boldsymbol{\mathit{X}}$ is correlated.
\end{proof}

\begin{theorem}
Let $\boldsymbol{\mathit{A}}$ be multiple uncorrelated stochastic signals with total spectral bandwidth $B<+\infty$, as defined in \eqref{eq:bandlimited_spectrum}. 
Let $\boldsymbol{\mathit{U}}\in \mathbb{R}^{N\times M}$ be a mixing matrix with 
$\text{rank}(\boldsymbol{\mathit{U}})=M$
that generates correlated signals $\boldsymbol{\mathit{X}}$ according to \eqref{eq:signal_model}.
Then, there exists a sampling set $\mathcal{S}$ satisfies
\begin{equation*}
    D(\mathcal{S})\ge \frac{B}{N}
\end{equation*}
such that a reconstruction $\hat{\boldsymbol{\mathit{X}}}$ of the signals $\boldsymbol{\mathit{X}}$ can be obtained, satisfying the zero MSE condition: $\mathbb{E}\left [ \left |\boldsymbol{\mathit{X}}-\hat{\boldsymbol{\mathit{X}}}\right |^2\right ]=0$.
\end{theorem}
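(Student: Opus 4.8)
The plan is to prove the statement as an achievability result: exhibit a spatio-temporal sampling set whose total average density equals $B/N$ and from whose samples $\boldsymbol{\mathit{X}}$ is recovered with zero MSE, so that in particular $D(\mathcal{S})\ge B/N$ holds with the minimal value attained. The starting observation is that, since $\text{rank}(\boldsymbol{\mathit{U}})=M$ with $N>M$, the matrix admits a left inverse $\boldsymbol{\mathit{U}}^{\dagger}=(\boldsymbol{\mathit{U}}^{H}\boldsymbol{\mathit{U}})^{-1}\boldsymbol{\mathit{U}}^{H}$ with $\boldsymbol{\mathit{U}}^{\dagger}\boldsymbol{\mathit{U}}=\boldsymbol{\mathit{I}}_M$. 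Because $\boldsymbol{\mathit{A}}\mapsto\boldsymbol{\mathit{U}}\boldsymbol{\mathit{A}}$ and $\boldsymbol{\mathit{X}}\mapsto\boldsymbol{\mathit{U}}^{\dagger}\boldsymbol{\mathit{X}}$ are bounded deterministic maps, reconstructing $\boldsymbol{\mathit{X}}$ with $\mathbb{E}[\,|\boldsymbol{\mathit{X}}-\hat{\boldsymbol{\mathit{X}}}|^2\,]=0$ is equivalent to reconstructing the latent vector $\boldsymbol{\mathit{A}}$ with zero MSE: if $\hat{\boldsymbol{\mathit{A}}}=\boldsymbol{\mathit{A}}$ in the mean-square sense then $\hat{\boldsymbol{\mathit{X}}}=\boldsymbol{\mathit{U}}\hat{\boldsymbol{\mathit{A}}}$ inherits zero error. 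This reduces the task to capturing, from linear samples of $\boldsymbol{\mathit{X}}$, enough information to recover the $M$ uncorrelated band-limited sources.

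I would then pass to the spectral (Cram\'er) representation $A(m,t)=\int e^{i2\pi f t}\,dZ_m(f)$, so that $X(n,t)=\int e^{i2\pi f t}\sum_{m}\boldsymbol{\mathit{U}}(n,m)\,dZ_m(f)$, and for each frequency introduce the active set $\mathcal{M}(f)=\{m:f\in\mathcal{B}_m\}$ with $k(f):=|\mathcal{M}(f)|$. Two structural facts drive the argument: (i) the spectral vector of $\boldsymbol{\mathit{X}}$ at $f$ lies in the $k(f)$-dimensional column space of the submatrix $\boldsymbol{\mathit{U}}_{\mathcal{M}(f)}$, which has full column rank $k(f)$ precisely because $\boldsymbol{\mathit{U}}$ has full column rank, so the active latent components are linearly identifiable from the $N$ observed channels; and (ii) the total number of spectral degrees of freedom per unit time is $\int_{\mathbb{R}}k(f)\,df=\sum_{m=1}^{M}\mu(\mathcal{B}_m)=B$. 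Fact (ii) pins the information rate that a successful scheme must acquire at $B$ samples per unit time in total, i.e.\ an average of $B/N$ per channel, matching the claimed density.

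The construction achieving this rate is a spectral-partition plus multi-coset spatio-temporal sampling scheme. I would split the occupied spectrum into sub-bands on which $\mathcal{M}(f)$ (hence the effective mixing submatrix) is constant, and on each sub-band assign a periodic nonuniform (multi-coset) sampling pattern spread across the $N$ channels, so that the per-channel rates sum to $B$ and the average density is exactly $B/N$. Recovery proceeds band by band: within a sub-band the sampled values relate to the active spectral components through a matrix combining the spatial mixing $\boldsymbol{\mathit{U}}_{\mathcal{M}(f)}$ with the temporal aliasing induced by the chosen cosets; selecting the cosets so this matrix is invertible lets us solve for the active $dZ_m$, after which ideal band-pass interpolation reconstructs each $A(m,\cdot)$ and we set $\hat{\boldsymbol{\mathit{X}}}=\boldsymbol{\mathit{U}}\hat{\boldsymbol{\mathit{A}}}$. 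Since no spectral content is discarded and the demixing is exact, the residual has zero power, giving $\mathbb{E}[\,|\boldsymbol{\mathit{X}}-\hat{\boldsymbol{\mathit{X}}}|^2\,]=0$.

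The main obstacle I anticipate is the joint design of the sampling pattern: one must simultaneously hit the total rate $B$ (not merely bound it) and guarantee invertibility of the coupled spatial--temporal recovery matrix at almost every frequency, for supports $\mathcal{B}_m$ that are only assumed Lebesgue-measurable of finite measure rather than finite unions of intervals. For such general multiband spectra the existence of a sampling set at the critical density is not elementary, so I expect to invoke Landau-type and universal-sampling results for multiband signals together with a measurable selection of the active submatrices $\boldsymbol{\mathit{U}}_{\mathcal{M}(f)}$ to make the per-frequency inversion well defined. Establishing genuine (stable) invertibility, so that the mean-square error is exactly zero rather than only asymptotically small, is the delicate quantitative step; the rank-$M$ hypothesis on $\boldsymbol{\mathit{U}}$ is what ultimately makes this local inversion always possible.
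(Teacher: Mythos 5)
Your proposal is correct and follows essentially the same route as the paper: both treat the theorem as an achievability result, reduce reconstruction of $\boldsymbol{\mathit{X}}$ to recovery of the $M$ uncorrelated band-limited latent sources, count the total spectral degrees of freedom as $B$, and achieve density $B/N$ by partitioning the spectrum into sub-bands on which the set of active sources is constant and inverting the corresponding full-column-rank submatrix of $\boldsymbol{\mathit{U}}$ after Nyquist-rate temporal sampling (this is exactly the constructive scheme of Sec.~3.2). Your treatment is in fact more careful than the paper's on the points you flag --- general measurable supports, the per-frequency invertibility, and the exact (rather than merely bounded) total rate --- which the paper's proof glosses over.
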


\begin{proof}
As to latent signals $\boldsymbol{\mathit{A}}$, for single channel $m$, partition $B_m$ into $L_m$ disjoint subbands $\mathcal{B}_m=\bigcup_{l=1}^{L_m}\mathcal{B}_{m,l}$, where each subband has bandwidth $B_{m,l}=\mu(\mathcal{B}_{m,l})$ and $\sum_{l=1}^{L}B_{m,l}=B_m$.

Apply bandpass filtering on the original power spectrum to isolate each subband, 
down-convert each subband to baseband, 
and then sample each down-converted subband at its Nyquist rate $f^s = B_{m,l} $ using the sampling theorem for bandlimited WSS process.
Up-convert and superpose the reconstructions of all disjoint subbands to recover $A(m,\cdot)$,
which achieves a reconstructed mean square error of zero,
with a total sampling rate equals $\sum_{l=1}^{L}B_{m,l}=B_m$.
Hence the lower bound of the sampling rate for single channel is $B_m$.

For multiple uncorrelated signals $\boldsymbol{\mathit{A}}$, $\boldsymbol{\mathit{R}}_{\boldsymbol{\mathit{A}}} (i,j,\tau) \equiv 0 $ when $i\ne j$, thus the cross power spectral densities satisfy $\boldsymbol{\mathit{S}}_{\boldsymbol{\mathit{A}}}(i,j,\tau) = \mathscr{F}\{\boldsymbol{\mathit{R}}_{\boldsymbol{\mathit{A}}}(i,j,\tau)\}= 0$ when $i\ne j$.
Therefore, the multiple uncorrelated signals are merely a simple superposition of $M$ single-channel signals.

Each single channel
$A(m,\cdot),m\in \{1,...,M\}$ requires sampling density at least $B_m=\mu(\mathcal{B}_m)$.
Summing over all $M$ channels gives a total sampling rate of $\sum_{m=1}^{M}B_m=B$.

Since the joint bandwidth of $\boldsymbol{\mathit{A}}$ is $B$, the total sampling rate requires to reconstruct the multiple uncorrelated signals $\boldsymbol{\mathit{A}}$ is $B$.
We can derive the sampling rates and temporal sampling sets from each subband of $\boldsymbol{\mathit{A}}$ and sample the $N$ channels after mixing using the same temporal sampling set and obtain the corresponding spatial sampling set via the linear transform $\boldsymbol{\mathit{U}}$, thus $\mathcal{S}$ is derived, with total sampling rate $B$.
Therefore, the total sampling density of $\boldsymbol{\mathit{X}}$ satisfies $D(\mathcal{S})\ge \frac{B}{N}$. 
\end{proof}

Theorem 1 asserts that the lower bound of the sampling density for correlated stochastic signals is the total spectral bandwidth of its latent signals divided by the number of correlated signals.




\subsection{Multi-band sampling scheme}
\label{ssec:multi-band sampling scheme}

\begin{figure}[t]
\centering
\includegraphics[width=6.2cm]{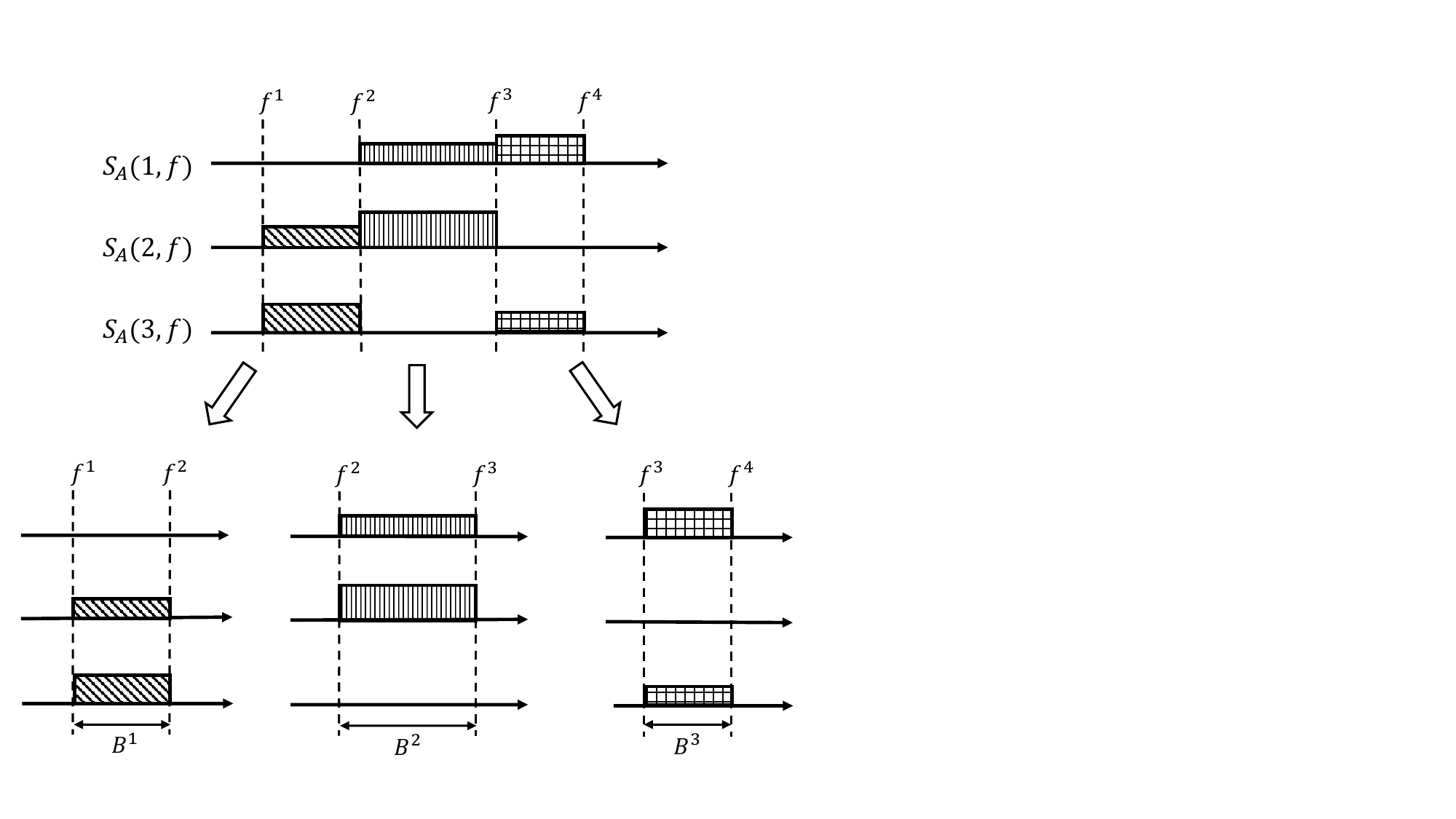}
\caption{Example of subband divison.}
\label{fig:subband division}
\end{figure}

In this subsection, we construct a multi-band sampling scheme that samples $\boldsymbol{\mathit{X}}$ at the lowest total sampling density $\frac{B}{N}$ and achieves zero MSE reconstruction.
The scheme adapts the multi-band partitioning principle, previously explored for graph signals in \cite{sheng2024sampling}, to the context of multiple stochastic signals.

\begin{figure}[t]
\centering
\includegraphics[width=6.6cm]{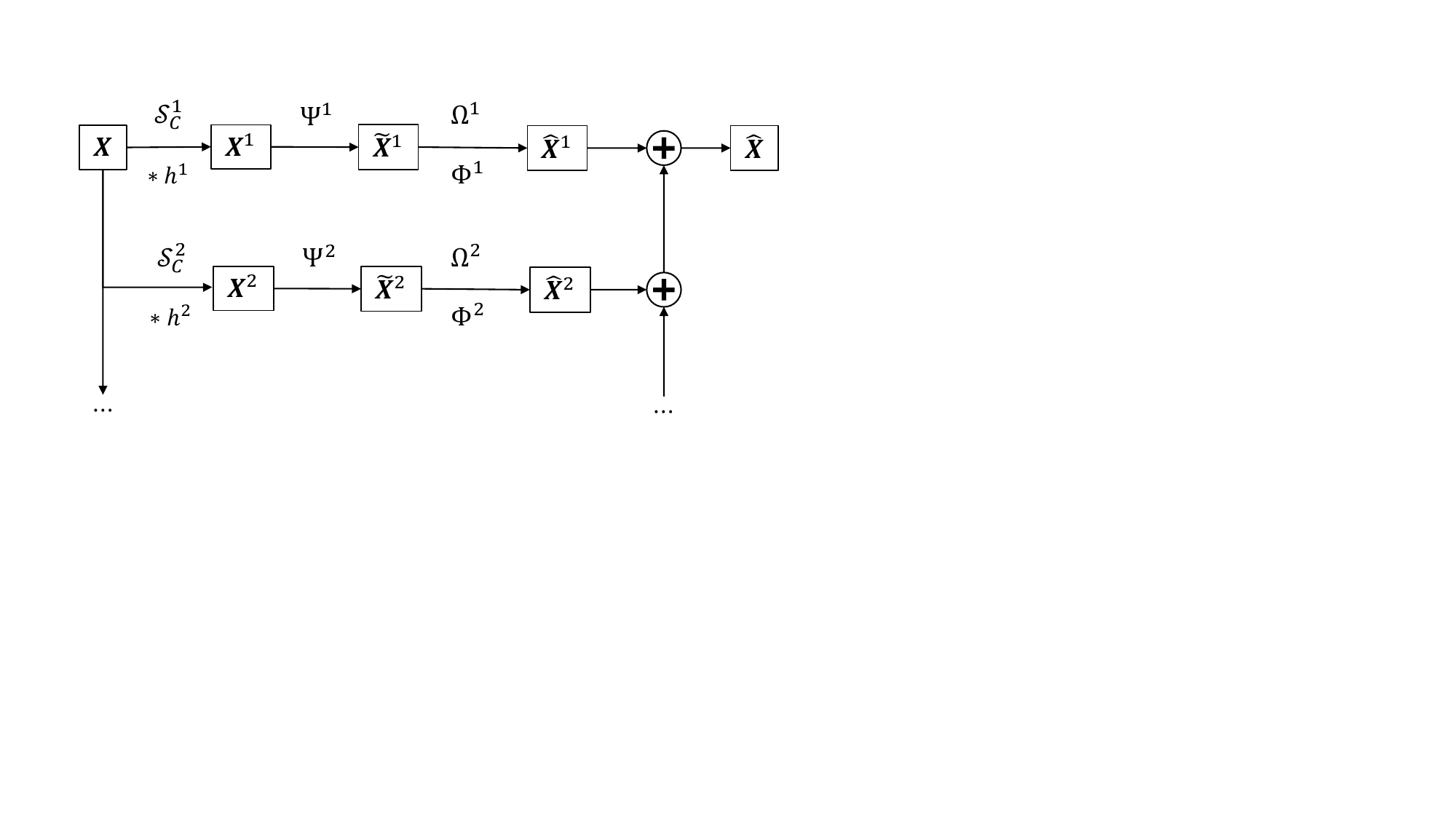}
\caption{Flow chart of the multi-band sampling scheme.}
\label{fig:multiband flow chart}
\end{figure}



We begin by partitioning $\boldsymbol{\mathit{S}}_{\boldsymbol{\mathit{A}}}(\cdot,f)$ into $L$ subbands, each corresponding to a frequency interval $[f^l,f^{l+1}]$. 
The boundaries of the frequency intervals are chosen at frequencies where the PSD of any latent source either starts or vanishes, which ensures that, within each subband, the PSD of a given source is either identically zero or nonzero.
\Cref{fig:subband division} provides an illustrative example of the subband division.

Sampling and reconstruction are performed independently within each subband. 
For subband $l\in \{1,...,L\}$, let $\Gamma^l$ denote the index set of latent sources with nonzero PSD.
The signal components corresponding to the $l$-th subband are extracted by convolving the correlated signals $\boldsymbol{\mathit{X}}$ with a band-pass filter, $h^l(t)$. 
The frequency response of this filter, $H^l(f)$, is equal to unity over $[f^l,f^{l+1}]$ and zero elsewhere. 
This operation yields the following relation for the subband signals: $\boldsymbol{\mathit{X}} \ast h^l= \boldsymbol{\mathit{U}}(\cdot,\Gamma^l)\left (\boldsymbol{\mathit{A}}(\Gamma^l,\cdot) \ast h^l \right )$.


Select $\left | \Gamma^l \right |$ linearly independent rows from $\boldsymbol{\mathit{U}}(\cdot,\Gamma^l)$, where $\Xi^l$ is the index set.
The spatially sampled signals are therefore $\boldsymbol{\mathit{X}}^l=\boldsymbol{\mathit{U}}(\Xi^l,\Gamma^l)\left (\boldsymbol{\mathit{A}}(\Gamma^l,\cdot) \ast h^l \right )$.
For notational convenience set $\boldsymbol{\mathit{U}}^l:=\boldsymbol{\mathit{U}}(\cdot,\Gamma^l)$ and $\boldsymbol{\mathit{U}}_s^l:=\boldsymbol{\mathit{U}}(\Xi^l,\Gamma^l)$.
The corresponding spatial sampling set is $\mathcal{S}_C^l=\{n_{s_1},n_{s_2},...,n_{s_{|\Gamma^l|}}\}$.

$\boldsymbol{\mathit{X}}^l(n,\cdot)$ for $n \in \mathcal{S}_C^l$ is sampled in time at the rate $f^{l,s}=f^{l+1}-f^l=B^l$.
Denote the temporal sampling set by $\mathcal{S}_T^l=\{t_{ns_1},t_{ns_2},...,t_{ns_{B^l}}\}$, we have the multi-band sampled signals at the $l$-th subband: 
\begin{equation*}
\tilde{\boldsymbol{\mathit{X}}}^l(\mathcal{S}_C^l,\mathcal{S}_T^l)=\Psi^l\boldsymbol{\mathit{X}}^l,
\end{equation*}
where $\Psi^l\in \{0,1\}^{B^l\times |\Gamma^l|}$ is the sampling matrix.

In each subband, the sampled signals  $\tilde{\boldsymbol{\mathit{X}}}^l(\mathcal{S}_C^l,\mathcal{S}_T^l)$ can be temporally interpolated by $\Omega^l=\mathrm{sinc}(B^l\allowbreak t^l)e^{j2\pi f^{l,c} t^l}$ to obtain $\hat{\boldsymbol{\mathit{X}}}^l(\mathcal{S}_C^l,\cdot)$,
where $f^{l,c}=\frac{f^l+f^{l+1}}{2}$ and $t^l=t-t_{ns_i}, \, t_{ns_i} \in \mathcal{S}_T^l$.
To lift the reconstruction to all spatial channels, we apply the spatial interpolation operator 
\begin{equation*}
    \Phi^l=\boldsymbol{\mathit{U}}^l ({\boldsymbol{\mathit{U}}_s^l}^H \boldsymbol{\mathit{U}}_s^l)^{-1} {\boldsymbol{\mathit{U}}_s^l}^H,
\end{equation*}
so that the full subband reconstruction is $\hat{\boldsymbol{\mathit{X}}}^l=\Phi^l \hat{\boldsymbol{\mathit{X}}}^l(\mathcal{S}_C^l,\cdot)$, which is valid because $\text{rank}(\boldsymbol{\mathit{U}}_s^l)=|\Gamma^l|$.

Finally, summing the reconstructions over all subbands produces the complete reconstruction 
\begin{equation*}
\hat{\boldsymbol{\mathit{X}}}=\sum_{l=1}^{L}\hat{\boldsymbol{\mathit{X}}}^l,
\end{equation*}
and this scheme attains zero MSE, with $\mathbb{E}\left [ \left |\boldsymbol{\mathit{X}}-\hat{\boldsymbol{\mathit{X}}}\right |^2\right ]=0$.

The process of the proposed scheme is shown in \Cref{fig:multiband flow chart}.

\section{Experiments}
\label{sec:experiments}

In this section, we present experiments on synthetic and real-world data to validate the sampling theory developed in \Cref{sec:sampling theorem} and demonstrate the effectiveness of the proposed multi-band sampling scheme.

\subsection{Results on synthetic dataset}
\label{ssec:results on synthetic dataset}

To synthesize correlated signals, we first generate 
$M$-channel latent signals $\boldsymbol{\mathit{A}}(m,t),m=1,...,M;t=1,...,T$, each a finite-length discrete-time random sequence with length $T$.

The statistical properties of $\boldsymbol{\mathit{A}}$ are defined by prescribing a PSD for each channel. 
$\boldsymbol{\mathit{S}}_{\boldsymbol{\mathit{A}}}(m,\cdot)$ is procedurally generated as a symmetric, piece-wise constant function composed of randomly placed rectangular blocks:
\begin{equation*}
    \boldsymbol{\mathit{S}}_{\boldsymbol{\mathit{A}}}(m,k) = \sum_{i=1}^{K_m} c_{m,i} \cdot \mathbb{I}(k \in \mathcal{F}_{m,i}),
\end{equation*}
where $k$ is the discrete frequency index, $K_m$ is the number of blocks for channel $m$, $c_{m,i}$ are their power levels, and $\mathcal{F}_{m,i}$ are disjoint frequency intervals.
 
Each signal $A(m,t)$ is then synthesized as a discrete, finite-length and real-valued sample function of a WSS process with the prescribed PSD by summing sinusoidal components with amplitudes derived from the PSD and the randomized phases.
For an even signal length $T$, the resulting time-domain expression is:
\begin{equation}
\begin{split}
    A(m,t) ={}& \frac{1}{\sqrt{T}} \Biggl( \sqrt{\boldsymbol{\mathit{S}}_{\boldsymbol{\mathit{A}}}(m,0)} + (-1)^t\sqrt{\boldsymbol{\mathit{S}}_{\boldsymbol{\mathit{A}}}(m,T/2)} \notag \\
    & + \sum_{k=1}^{T/2-1} 2\sqrt{\boldsymbol{\mathit{S}}_{\boldsymbol{\mathit{A}}}(m,k)} \cos\left(\frac{2\pi kt}{T} + \phi_{m,k}\right) \Biggr),
\end{split}
\end{equation}
where $\phi_{m,k}$ are independent random phases drawn uniformly from $[0,2\pi)$. 
Finally, a full-column-rank mixing matrix $\boldsymbol{\mathit{U}}\in \mathbb{R}^{N\times M}$ is randomly generated and the correlated signal samples are obtained using \Cref{eq:signal_model}, producing $N$ correlated channels.



After generating signals, we apply our proposed multi-band sampling scheme, detailed in \Cref{ssec:multi-band sampling scheme}, to sample these signals at the theoretical density $\frac{B}{N}$ and produce the reconstruction $\hat{\boldsymbol{\mathit{X}}}$.

The Normalized Mean Squared Error (NMSE) is adopted to measure the reconstruction quality, defined as:
$\text{NMSE}=\mathbb{E}\left [ \left | \boldsymbol{\mathit{X}}- \hat{\boldsymbol{\mathit{X}}}\right |^2\right ] \Big / {\mathbb{E}\left [ \left | \boldsymbol{\mathit{X}}\right |^2\right ]}$.



\begin{figure}[t]
\centering
\includegraphics[width=3.8cm]{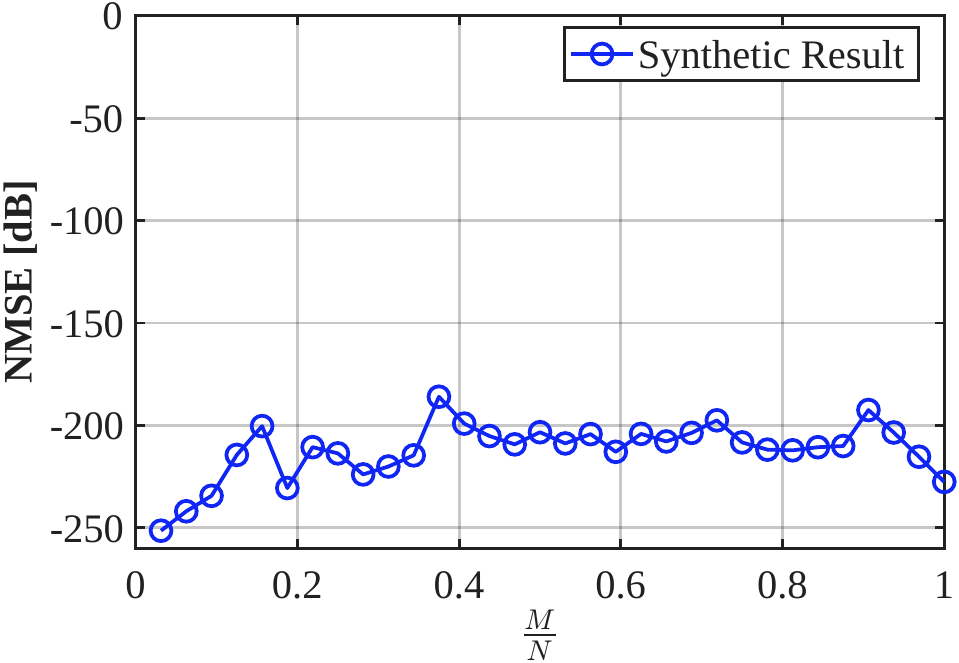}
\caption{Reconstruction NMSE across different $\frac{M}{N}$ ratio.}
\label{fig:synthetic data result}
\end{figure}

First, we evaluate the performance of multi-band sampling scheme under varying degrees of signal correlation, controlled by the ratio $\frac{M}{N}$. 
We fix the number of channels $N$ and vary the ratio $\frac{M}{N}$.
For each configuration, with $T=512$ and $50$ Monte Carlo trials, we compute the NMSE between the synthesized $\boldsymbol{\mathit{X}}$ and its reconstruction $\hat{\boldsymbol{\mathit{X}}}$.
As shown in \Cref{fig:synthetic data result}, the results demonstrate the robustness of our proposed scheme. 
It consistently maintains an NMSE below -180 dB at the theoretical sampling density $\frac{B}{N}$, regardless of the underlying correlation strength (from highly correlated, $\frac{M}{N}\to 0$, to weakly correlated, $\frac{M}{N}\to 1$).




\begin{figure}[t]
  \centering 

  \begin{subfigure}[b]{0.48\linewidth}
    \centering
    \includegraphics[width=3.8cm]{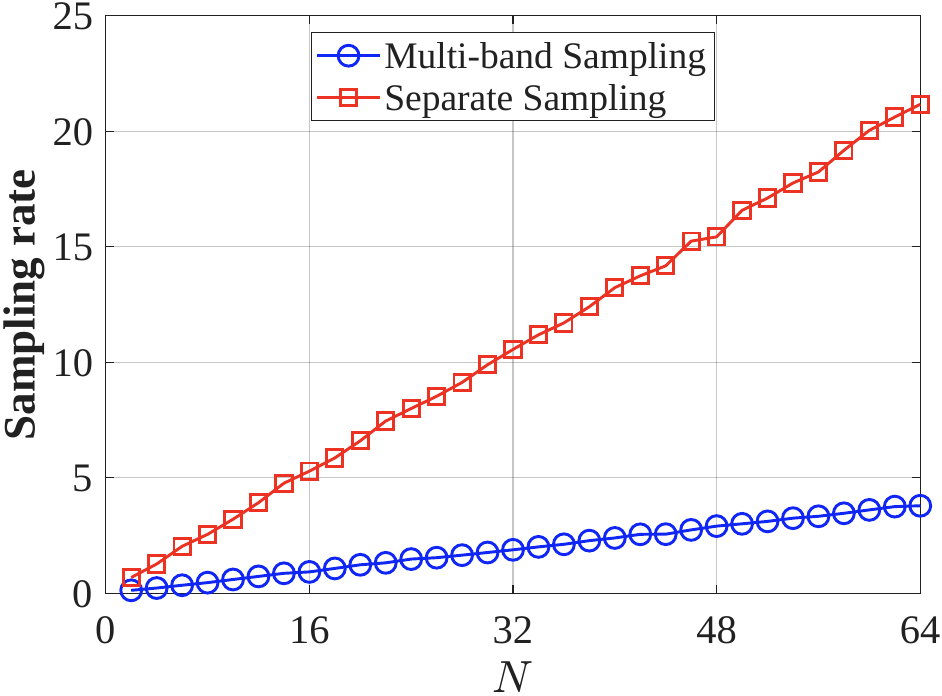}
    \caption{Sampling rate comparison.} 
    \label{fig:sampling_rate}         
  \end{subfigure}
  \hfill 
  \begin{subfigure}[b]{0.48\linewidth}
    \centering
    \includegraphics[width=3.9cm]{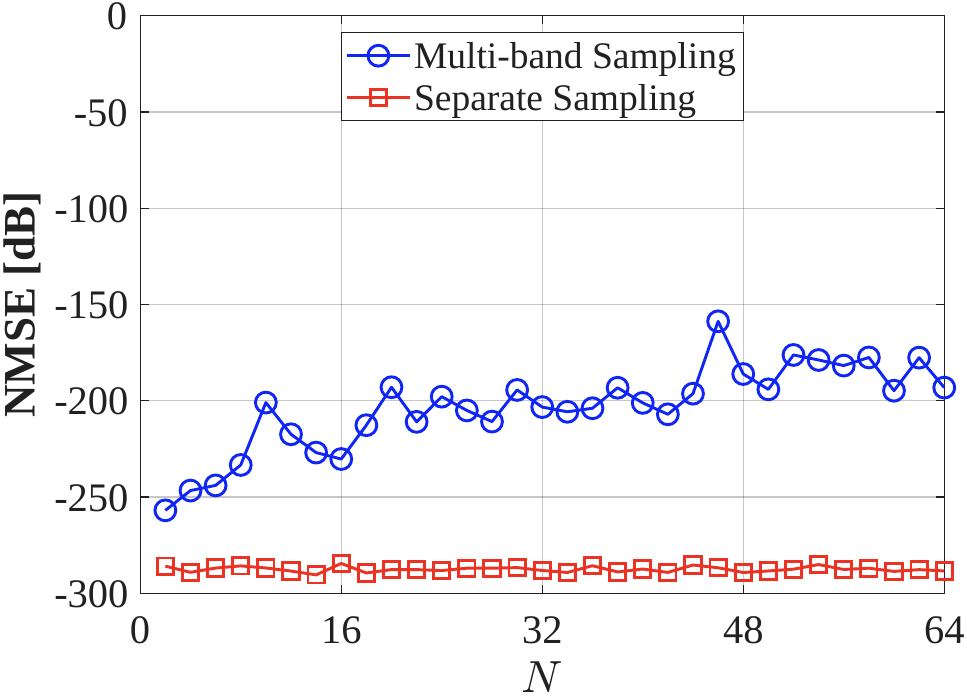}
    \caption{NMSE comparison.} 
    \label{fig:nmse_compare}  
  \end{subfigure}

  \caption{The sampling rate and NMSE by different sampling scheme.}
  \label{fig:multi-band sampling and separate sampling scheme}
\end{figure}



Next, to quantify the sampling efficiency, we compare our multi-band scheme against a separate approach where each channel is sampled independently. 
We fix the ratio $\frac{M}{N}=0.5$ and set $T=512$, with results averaged over $50$ Monte Carlo trials.
The results are summarized in \Cref{fig:multi-band sampling and separate sampling scheme}. 
\Cref{fig:sampling_rate} compares the sampling rates, which is defined as the total number of samples divided by $T$. 
It is evident that our proposed scheme requires a substantially lower sampling rate than the baseline approach. 
Crucially, \Cref{fig:nmse_compare} confirms this efficiency gain does not compromise reconstruction fidelity.
Despite the significant reduction in samples, the NMSE of our method remains at a comparably low level to the separate sampling scheme, confirming near-lossless performance. 
Taken together, these results validate that our method significantly reduces sampling costs while maintaining reconstruction accuracy.

\subsection{Results on real dataset}
\label{ssec:results on real dataset}




\begin{figure}[t]
  \centering 

  \begin{subfigure}[b]{0.44\linewidth}
    \centering
    \includegraphics[width=3.6cm]{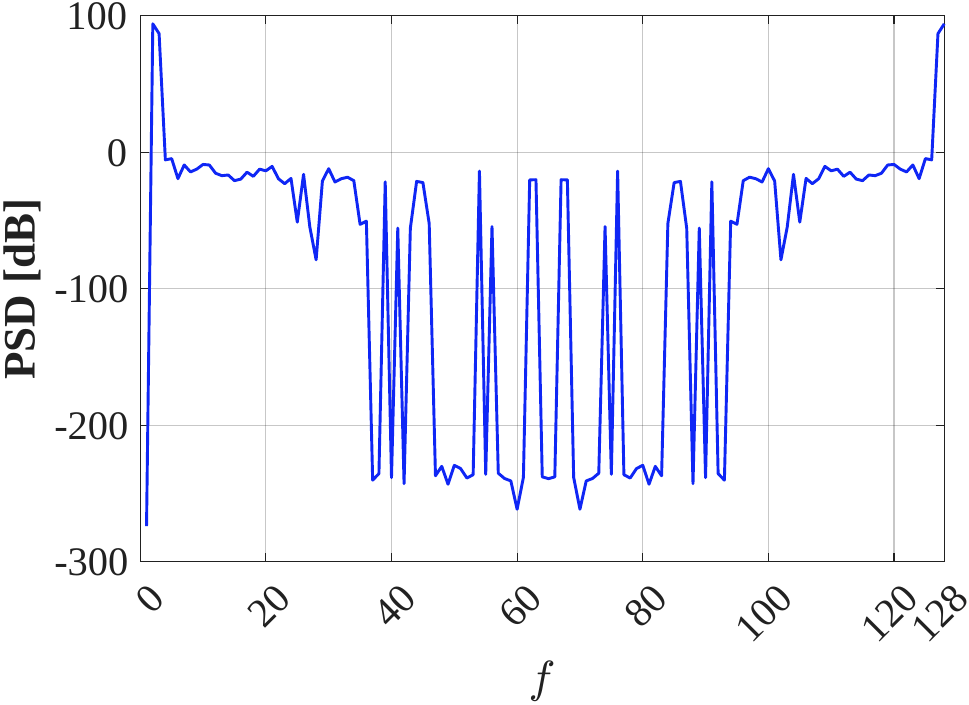}
    \caption{Estimated PSD.} 
    \label{fig:real_PSD}         
  \end{subfigure}
  \hfill 
  \begin{subfigure}[b]{0.55\linewidth}
    \centering
    \includegraphics[width=4.4cm]{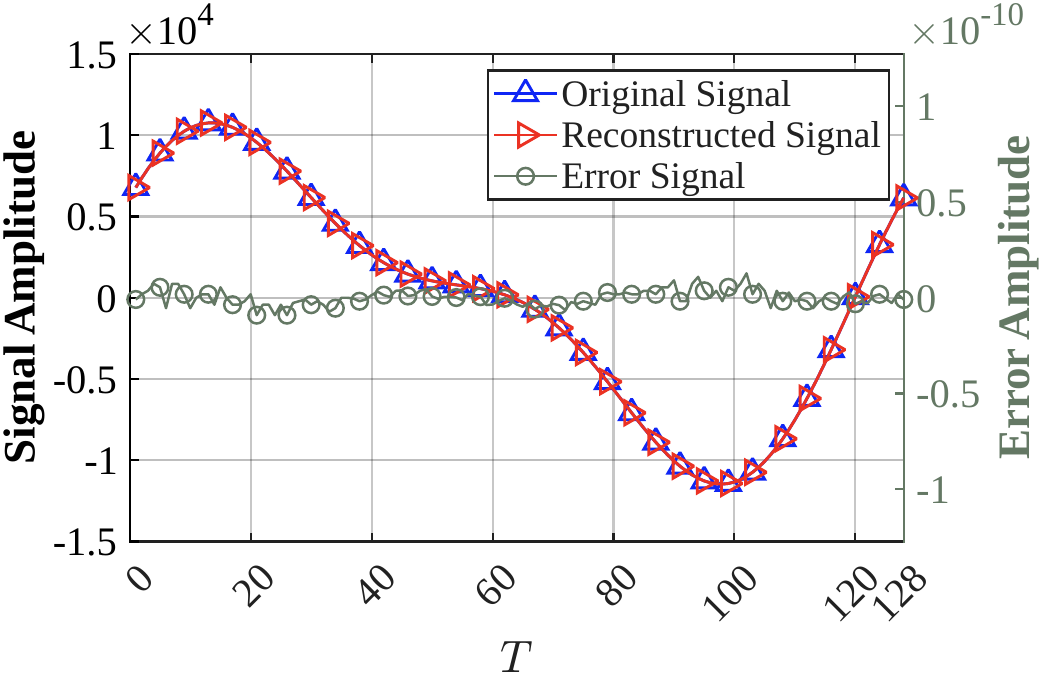}
    \caption{Time-domain reconstruction.} 
    \label{fig:real_reconstruction}  
  \end{subfigure}

\caption{Real data reconstruction of a representative channel.}
\label{fig:real data result}
\end{figure}

We test the multi-band sampling scheme on the public  dataset Time Series Database Library (TSDL) \cite{tsdl}, which contains $648$ time series spanning domains such as finance, agriculture, meteorology, physics, production and sales.

We select a subset of series that satisfy the WSS assumption and truncate every series to a fixed length $T$ for comparability. 
A collection of $N$ series from different sources forms the raw correlated signals.
We project the raw correlated signals into the latent uncorrelated component space via PCA to estimate the mixing matrix and the unprocessed latent signals.
For each estimated source we compute an empirical PSD.

Finite-sample effects and observational noise may induce spectral leakage and estimation noise.
To mitigate spurious weak spectral components, we threshold each source's estimated PSD by setting values below 5\% of that source's maximum PSD to zero.
The thresholded spectra define the spectral supports used in subsequent processing and determine the total spectral bandwidth $B$.
After preprocessing, we derive the thresholded latent signals $\boldsymbol{\mathit{A}}$ and corresponding correlated signals $\boldsymbol{\mathit{X}}$.
Similar to the synthetic data construction, we then apply our multi-band sampling scheme at the resulting theoretical density $\frac{B}{N}$ to reconstruct the correlated signals.


Set $T=128$ and $N=10$, the results for a representative channel of multiple correlated signals $\boldsymbol{\mathit{X}}$ are presented in \Cref{fig:real data result}. 
\Cref{fig:real_PSD} displays the estimated PSD of this single channel.
As illustrated in \Cref{fig:real_reconstruction}, the reconstructed signal closely tracks the original signal waveform.
Visually, the two waveforms are nearly indistinguishable, and the error signal plotted below remains close to zero with a significantly smaller amplitude relative to the signal itself, providing a clear validation of our scheme's high-fidelity performance.

Quantitatively, the reconstruction achieves a total NMSE of -218.6915 dB between the observed signals $\boldsymbol{\mathit{X}}$ and the complete reconstruction $\hat{\boldsymbol{\mathit{X}}}$, confirming the effectiveness of our proposed method on real-world data.




\section{Conclusion}
\label{sec:conclusion}

In this paper, based on a latent source model, we establish a sampling theory for multiple correlated stochastic signals. 
We derive a fundamental lower bound on the total sampling density required for reconstruction with zero MSE, proving it equals the total spectral bandwidth of the latent sources divided by the number of correlated signals. 
Furthermore, we propose a constructive multi-band sampling scheme to achieve this bound via spectral partitioning. 
Experiments on synthetic and real datasets confirm that our proposed scheme achieves near-lossless reconstruction at the theoretical minimum rate, establishing a direct link between inter-channel correlation and sampling efficiency.

\vfill\pagebreak

\bibliographystyle{IEEEtran}
\bibliography{refs}

\begin{thebibliography}{10}
\providecommand{\url}[1]{#1}
\csname url@samestyle\endcsname
\providecommand{\newblock}{\relax}
\providecommand{\bibinfo}[2]{#2}
\providecommand{\BIBentrySTDinterwordspacing}{\spaceskip=0pt\relax}
\providecommand{\BIBentryALTinterwordstretchfactor}{4}
\providecommand{\BIBentryALTinterwordspacing}{\spaceskip=\fontdimen2\font plus
\BIBentryALTinterwordstretchfactor\fontdimen3\font minus \fontdimen4\font\relax}
\providecommand{\BIBforeignlanguage}[2]{{%
\expandafter\ifx\csname l@#1\endcsname\relax
\typeout{** WARNING: IEEEtran.bst: No hyphenation pattern has been}%
\typeout{** loaded for the language `#1'. Using the pattern for}%
\typeout{** the default language instead.}%
\else
\language=\csname l@#1\endcsname
\fi
#2}}
\providecommand{\BIBdecl}{\relax}
\BIBdecl

\bibitem{papoulis2002probability}
A.~Papoulis and S.~U. Pillai, \emph{Probability, random variables, and stochastic processes}.\hskip 1em plus 0.5em minus 0.4em\relax McGraw-hill, 2002, ch. 10.5, pp. 476--483.

\bibitem{ahmed2019compressive}
A.~Ahmed and J.~Romberg, ``Compressive sampling of ensembles of correlated signals,'' \emph{IEEE Trans. Inf. Theory}, vol.~66, no.~2, pp. 1078--1098, 2019.

\bibitem{ji2019hilbert}
F.~Ji and W.~P. Tay, ``A hilbert space theory of generalized graph signal processing,'' \emph{IEEE Trans. Signal Process.}, vol.~67, no.~24, pp. 6188--6203, 2019.

\bibitem{sheng2024sampling}
H.~Sheng, H.~Feng, J.~Yu, F.~Ji, and B.~Hu, ``Sampling theory of jointly bandlimited time-vertex graph signals,'' \emph{Signal Process.}, vol. 222, p. 109522, 2024.

\bibitem{tsdl}
R.~Hyndman and Y.~Yang, ``Tsdl: Time series data library,'' [Online]. Available: \url{https://pkg.yangzhuoranyang.com/tsdl/}, 2018.

\bibitem{rosenblatt2012random}
M.~Rosenblatt, \emph{Random Processes}.\hskip 1em plus 0.5em minus 0.4em\relax Springer Science \& Business Media, 2012, vol.~17, ch. VII.b, pp. 153--160.

\bibitem{kipnis2017distortion}
A.~Kipnis, A.~J. Goldsmith, and Y.~C. Eldar, ``The distortion rate function of cyclostationary gaussian processes,'' \emph{IEEE Trans. Inf. Theory}, vol.~64, no.~5, pp. 3810--3824, 2017.

\bibitem{shlezinger2019joint}
N.~Shlezinger, S.~Salamatian, Y.~C. Eldar, and M.~M{\'e}dard, ``Joint sampling and recovery of correlated sources,'' in \emph{2019 IEEE International Symposium on Information Theory (ISIT)}.\hskip 1em plus 0.5em minus 0.4em\relax IEEE, 2019, pp. 385--389.

\bibitem{mishali2010theory}
M.~Mishali and Y.~C. Eldar, ``From theory to practice: Sub-nyquist sampling of sparse wideband analog signals,'' \emph{IEEE J. Sel. Top. Sign. Proces.}, vol.~4, no.~2, pp. 375--391, 2010.

\bibitem{ahmed2023sub}
A.~Ahmed, F.~Shamshad, and H.~Hameed, ``Sub-nyquist sampling of sparse and correlated signals in array processing,'' \emph{Digit. Signal Process.}, vol. 140, p. 104125, 2023.

\bibitem{amini2021exploiting}
F.~Amini, Y.~Hedayati, and H.~Zanddizari, ``Exploiting the inter-correlation of structural vibration signals for data loss recovery: A distributed compressive sensing based approach,'' \emph{Mech. Syst. Signal Pr.}, vol. 152, p. 107473, 2021.

\bibitem{bjornson2017massive}
E.~Bj{\"o}rnson, J.~Hoydis, L.~Sanguinetti \emph{et~al.}, ``Massive mimo networks: Spectral, energy, and hardware efficiency,'' \emph{Found. Trends Signal Process.}, vol.~11, no. 3-4, pp. 154--655, 2017.

\bibitem{panzeri2022structures}
S.~Panzeri, M.~Moroni, H.~Safaai, and C.~D. Harvey, ``The structures and functions of correlations in neural population codes,'' \emph{Nat. Rev. Neurosci.}, vol.~23, no.~9, pp. 551--567, 2022.

\bibitem{blanco2020microphone}
M.~Blanco~Galindo, P.~Coleman, and P.~J. Jackson, ``Microphone array geometries for horizontal spatial audio object capture with beamforming,'' \emph{J. Audio Eng. Soc.}, vol.~68, no.~5, pp. 324--337, 2020.

\bibitem{angrisani2014multi}
L.~Angrisani, F.~Bonavolont{\`a}, A.~Liccardo, R.~S.~L. Moriello, L.~Ferrigno, M.~Laracca, and G.~Miele, ``Multi-channel simultaneous data acquisition through a compressive sampling-based approach,'' \emph{Measurement}, vol.~52, pp. 156--172, 2014.

\bibitem{hashem2015rise}
I.~A.~T. Hashem, I.~Yaqoob, N.~B. Anuar, S.~Mokhtar, A.~Gani, and S.~U. Khan, ``The rise of “big data” on cloud computing: Review and open research issues,'' \emph{Information systems}, vol.~47, pp. 98--115, 2015.

\bibitem{kay1988modern}
S.~M. Kay, \emph{Modern spectral estimation}.\hskip 1em plus 0.5em minus 0.4em\relax Pearson Education India, 1988, ch. 3.4, pp. 51--54.

\bibitem{jin2025sampling}
L.~Jin, H.~Sheng, H.~Feng, and B.~Hu, ``On sampling of multiple correlated stochastic signals,'' \emph{arXiv preprint arXiv:2509.09225}, 2025.

\end{thebibliography}

\end{document}